\let\c@author\relax
        \let\Cref\ref
        \let\cref\ref
        \let\citeauthor\ref
\newcommand{\email}[1]{\href{mailto:#1}{\texttt{#1}}}
\theoremstyle{definition}
\newtheorem{definition}{Definition}
\renewcommand\paragraph{\@startsection{paragraph}{4}{\z@}%
                     {-12\p@ \@plus -4\p@ \@minus -4\p@}%
                     {-0.5em \@plus -0.22em \@minus -0.1em}%
                     {\normalfont\normalsize\bfseries}}
\numberwithin{equation}{section}
\renewcommand{\part}{\mathsf{part}}
\newcommand{\ddir}{\ensuremath{d_{\rightarrow}}}
\newcommand{\TS}{\textsc{Token Sliding}\xspace}
\newcommand{\Gund}{G^{\text{und}}}
\definecolor{bostonuniversityred}{rgb}{0.8, 0.0, 0.0}
\tikzstyle{node}=[draw,circle,minimum size=2em]
\tikzstyle{edge}=[draw,thick,->]
\declaretheorem{theorem}
\declaretheorem{lemma}
\declaretheorem{corollary}
\crefname{construction}{Construction}{Construction}
\crefname{claim}{Claim}{Claim}
\title{Directed Token Sliding\thanks{This work began during the visit of N.~Banerjee and C.~Engels to D.A.~Hoang at the Vietnam Institute for Advanced Study in Mathematics (VIASM). We would like to thank VIASM for providing a productive research environment and excellent working conditions.}}
    \author{Niranka Banerjee}
    \affil{Research Institute of Mathematical Sciences, Kyoto University, Japan\footnote{This work was supported by JSPS KAKENHI Grant Number JP20H05967.}\protect\\ \email{niranka@gmail.com}}
    \author{Christian Engels}
    \affil{National Institute of Informatics, Tokyo, Japan\footnote{This work was supported by JSPS KAKENHI Grant Number JP18H05291.}\protect\\ \email{christian.engels@gmail.com}}
    \author{Duc A.\ Hoang}
    \affil{VNU University of Science, Vietnam National University, Hanoi, Vietnam
	\protect\\ \email{hoanganhduc@hus.edu.vn}}
    \date{}
\begin{document}
\maketitle
\setcounter{footnote}{0} 

\begin{abstract}
	Reconfiguration problems involve determining whether two given configurations can be transformed into each other under specific rules.
  The \textsc{Token Sliding} problem asks whether, given two different set of tokens on vertices of a graph $G$, we can transform one into
  the other by sliding tokens step-by-step along edges of $G$ such that each resulting set of tokens forms an independent set in $G$.
  Recently, Ito et al. [MFCS 2022] introduced a directed variant of this problem. They showed that for general \emph{oriented} graphs (i.e., graphs where no pair of vertices can have directed edges in both directions), the problem remains $\mathsf{PSPACE}$-complete, and is
  solvable in polynomial time on oriented trees.

	In this paper, we further investigate the \textsc{Token Sliding} problem on various oriented graph classes. We show that the problem
  remains $\mathsf{PSPACE}$-complete for oriented split graphs, bipartite graphs and bounded treewidth graphs. Additionally,
  we present polynomial-time algorithms for solving the problem on oriented cycles and cographs.

  {\noindent\textbf{Keywords:} Reconfiguration problem, Token sliding, Directed graph, Computational complexity, PSPACE-completeness,
  Polynomial-time algorithm.}
\end{abstract}

\section{Introduction}\label{sec:intro}

\subsection{Reconfiguration Problems}\label{sec:reconf}
For the last two decades, \textit{reconfiguration problems} have emerged in different research areas, including recreational mathematics,
computational geometry, constraint satisfaction, distributed algorithms, motion planning, rerouting networks, algorithmic game theory, and
even quantum complexity theory~\cite{Heuvel13,Nishimura18,MynhardtN19,BousquetMNS24}.
In a reconfiguration setting, two \textit{feasible solutions} $S$ and $T$ of a computational problem (e.g., \textsc{Satisfiability}, \textsc{Independent Set}, \textsc{Dominating Set}, \textsc{Vertex-Coloring}, etc.) are given along with a \textit{reconfiguration rule} that
describes an \textit{adjacency relation} between feasible solutions.
The question is whether there is a sequence of adjacent feasible solutions that transforms $S$ into $T$ or vice versa.
Such a sequence, if exists, is called a \textit{reconfiguration sequence}.

One of the most well-studied reconfiguration problems is \textsc{Independent Set Reconfiguration (ISR)}.
As its name suggests, in a reconfiguration setting for \textsc{ISR}, each feasible solution corresponds to a set of tokens placed on
vertices of $G$ where no vertex has more than one token and the (vertices containing) tokens form an \textit{independent set} (i.e., a
vertex-subset of pairwise non-adjacent vertices) of $G$.
Three well-investigated reconfiguration rules are Token Sliding (TS, which involves sliding a token from one vertex to an adjacent
unoccupied vertex), Token Jumping (TJ, which involves moving a token from one vertex to any unoccupied vertex), and Token Addition/Removal
(TAR($k$), which involves adding or removing a token while maintaining at least $k$ tokens).
\textsc{ISR} under TS, introduced by \textcite{HearnD05}, is also known as the \TS problem and was the first reconfiguration
problem for independent sets studied in the literature.
We refer readers to the recent survey~\cite{BousquetMNS24} and the references therein for more details on the developments of \textsc{ISR}
and related problems.

It is worth mentioning that in almost all reconfiguration problems considered so far, reconfiguration is \textit{symmetric}: if there is a
reconfiguration sequence that transforms $S$ into $T$, then by reversing this sequence, one can obtain a reconfiguration sequence that
transforms $T$ into $S$.
From a graph-theoretic perspective, if we construct the so-called \textit{reconfiguration graph} --- a graph whose nodes are feasible
solutions of a computational problem and edges are defined by the given reconfiguration rule, then by \textit{symmetric} reconfiguration we
mean the reconfiguration graph is \textit{undirected}.
On the other hand, to the best of our knowledge, \textit{non-symmetric} reconfiguration (i.e., the reconfiguration graph is directed) has
only been studied in a few contexts, such as ``reconfiguration of vertex-colorings''~\cite{FelsnerHS09}, ``reconfiguration of independent
sets''~\cite{ItoIKNOTW22}, and recently ``token digraphs''~\cite{FernandesLPSTZ24}.

\subsection{Our Problems and Results}\label{sec:results}
In particular, \textcite{ItoIKNOTW22} proved that \TS is \PSPACE-complete on oriented graphs (i.e., directed graphs having no
symmetric pair of directed edges), \NP-complete on directed acyclic graphs, and can be solved in polynomial time on oriented trees.
In this setting, a vertex-subset of an oriented graph $G$ is \textit{independent} if it is also independent in the corresponding undirected
version of $G$ where all edge-directions are removed.
This paper is a follow-up of~\cite{ItoIKNOTW22}.
Here, we consider the computational complexity of \TS on some particular classes of oriented graphs.

In this paper, we show that \TS remains \PSPACE-complete even on oriented split graphs, oriented bipartite graphs
and oriented bounded treewidth graphs by reducing from their corresponding undirected variants, which are all known to be
\PSPACE-complete~\cite{LokshtanovM19,HearnD05,BelmonteKLMOS21,Wrochna18}.
(\cref{sec:hardness}).
On the positive side, we design a polynomial-time algorithm to solve \TS on oriented cycles and oriented cographs
(\cref{sec:poly}).

\section{Preliminaries}\label{sec:prelim}

For any undefined concepts and notations, we refer readers to~\cite{Diestel2017} and the paper~\cite{ItoIKNOTW22}.
For an oriented graph $G$, the \textit{underlying undirected graph} of $G$, denoted by $\Gund$, is the graph obtained from $G$ by removing all edge directions.
Basically, a vertex-subset of $G$ is \textit{independent} if it is independent on $\Gund$.
Similarly, the \textit{neighbourhoods} of a vertex $v$ of $G$, denoted by $N_G(v)$, is the set of all vertices adjacent to $v$ in $\Gund$, and $N_G[v] = N_G(v) \cup \{v\}$.
We sometimes omit the subscript when the graph under consideration is clear from the context.

In an instance $(G, S, T)$ of \TS, two independent sets $S$ and $T$ of an oriented graph $G$ are given.
The question is to decide whether there is a (ordered) sequence $\langle S = I_0, I_1, \dots, I_\ell = T \rangle$ from $S$ to $T$ such that
each $I_i$ ($0 \leq i \leq \ell$) is independent and $I_{i+1}$ ($0 \leq i \leq \ell-1$) is obtained from $I_i$ by sliding a token from a
vertex $u$ to one of its unoccupied out-neighbor $v$ along the \textit{directed edge} (or \textit{arc}) $(u, v)$ of $G$, where $I_i \setminus I_{i+1} = \{u\}$ and $I_{i+1} \setminus I_i = \{v\}$, i.e., $I_{i+1}$ is obtained from $I_i$ by the (valid) token-slide $u \to v$.
Thus, one can also write the above reconfiguration sequence as $\langle x_0 \to y_0, x_1 \to y_1, \dots, x_{\ell-1} \to y_{\ell - 1} \rangle$ to indicate that for $0 \leq i \leq \ell-1$, the set $I_{i+1}$ is obtained from $I_i$ by the token-slide $x_i \to y_i$.
In an undirected variant, everything can be defined similarly, except that the arc $(u, v)$ is now replaced by the undirected edge $uv$.

We conclude this section with the following simple remark: since \textcite{ItoIKNOTW22} already showed that \TS is in \PSPACE{} on
oriented graphs, to show the \PSPACE-completeness of \TS on a class of oriented graphs, it suffices to design a polynomial-time reduction
from a known \PSPACE-hard problem.

\section{Hardness Results}\label{sec:hardness}

\subsection{Oriented Split Graphs}\label{sec:split}

This section is devoted to proving the following theorem.
\begin{theorem}
	\TS is \PSPACE-complete on oriented split graphs.
\end{theorem}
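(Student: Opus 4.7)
The plan is to reduce from \TS on undirected split graphs, shown to be \PSPACE-complete in~\cite{BelmonteKLMOS21}, following the same edge-replacement strategy used for \cref{thm:planar}. Starting from an instance $(G, T_s, T_t)$ with split partition $V(G)=K\cup I$, I would construct an oriented split graph $G^\star$ with partition $K^\star\cup I^\star$ in which each undirected edge of $G$ is simulated by an oriented gadget. Because at most one token can reside in a clique of size $\geq 2$ at any time, the intra-clique edges of $K$ will be oriented as a fixed strongly connected tournament (for instance, a cyclic one) so that this single clique token can still reach every clique vertex, while each bipartite edge $kv$ with $k\in K$ and $v\in I$ will be replaced by a small gadget whose auxiliary vertices are placed in $I^\star$ in order to preserve the split structure. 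The starting and target placements $T_s^\star$ and $T_t^\star$ are obtained from $T_s$ and $T_t$ by placing additional tokens on prescribed gadget vertices, exactly in the spirit of the bookkeeping in \cref{fig:planar-with-tokens}.

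Correctness would then be established via two lemmas analogous to \cref{lem:planar:right,lem:planar:left}. For the forward direction, each undirected token slide $u\to v$ in a reconfiguration sequence for $(G, T_s, T_t)$ is simulated by a canonical sequence of slides inside the gadget $G^\star_{uv}$, together with prescribed updates in the neighboring gadgets that reflect the change of occupancy at $v$. For the backward direction, given a reconfiguration sequence $\mathcal I^\star$ in $G^\star$, I would project each slide onto $G$ by recording only those slides whose target lies in $V(G)\subseteq V(G^\star)$, and use a ``gadget-weight / gadget-domination'' invariant, exactly as in the proof of \cref{lem:planar:left}, to certify that the projection is a valid sequence of (maximum) independent sets of $G$.

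The main obstacle is designing the bipartite-edge gadget so that it is simultaneously (i) placeable inside a split graph without creating any edge inside $I^\star$, (ii) traversable in both orientations so as to faithfully simulate the two directions of an undirected edge, and (iii) rigid enough that no reconfiguration sequence in $G^\star$ can ``cheat'' by using the gadget in a way that has no counterpart in $G$. Constraint (i) is the most restrictive, since every auxiliary vertex added to $I^\star$ can only be adjacent to vertices of $K^\star$, so every internal edge of the gadget must involve a clique vertex and therefore interact with the global clique tournament. Balancing the clique orientation against the gadget requirements, and verifying that independence is maintained throughout, is where the bulk of the technical work will lie; I expect the final gadget to be of constant size per edge, so that $G^\star$ still has polynomial size and remains a split graph by construction.
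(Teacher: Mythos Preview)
Your plan diverges from the paper's construction, and the specific design choice you commit to --- placing all auxiliary gadget vertices in $I^\star$ --- cannot work. In a split graph, two vertices of $I^\star$ are never adjacent, so for a bipartite edge $kv$ with $k\in K$ and $v\in I$, any gadget whose auxiliaries lie entirely in $I^\star$ has every internal edge incident to a clique vertex; in particular, $v$ itself gains no new neighbours. If $v$ has $k$ as its \emph{only} neighbour in $G$ (which the \cite{BelmonteKLMOS21} instances certainly allow), then in $G^\star$ the vertex $v$ is still adjacent only to $k$, and the single arc between them lets a token pass in one direction but not the other. No amount of $I^\star$-auxiliaries or clique-tournament routing repairs this, because the token on $v$ has nowhere to go except along that one arc. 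A second, smaller issue: the gadget-domination bookkeeping you invoke from \cref{lem:planar:left} relies on the source and target being \emph{maximum} independent sets, which the split-graph hardness reduction does not guarantee.

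The paper sidesteps both problems by putting the auxiliaries on the \emph{clique} side rather than the independent side: it takes two copies $K^1_{G'},K^2_{G'}$ of $K_G$ together with two routing vertices $c_1,c_2$, all forming one big clique $K_{G'}$, and keeps a single copy of $I_G$. Each original bipartite edge $uv$ (with $u\in K_G$, $v\in I_G$) becomes the pair of arcs $(u^1,v^1)$ and $(v^1,u^2)$, and the arcs $(x,c_1),(c_1,c_2),(c_2,x)$ for every $x\in K^1_{G'}\cup K^2_{G'}$ let the unique clique token circulate freely. There are no per-edge gadgets and no extra tokens, so the forward and backward simulations are a direct case analysis on which side of the partition the moving token enters or leaves, with no need for any maximality assumption.
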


We reduce from \TS on undirected split graphs, which is known to be \PSPACE-complete~\cite{BelmonteKLMOS21}.

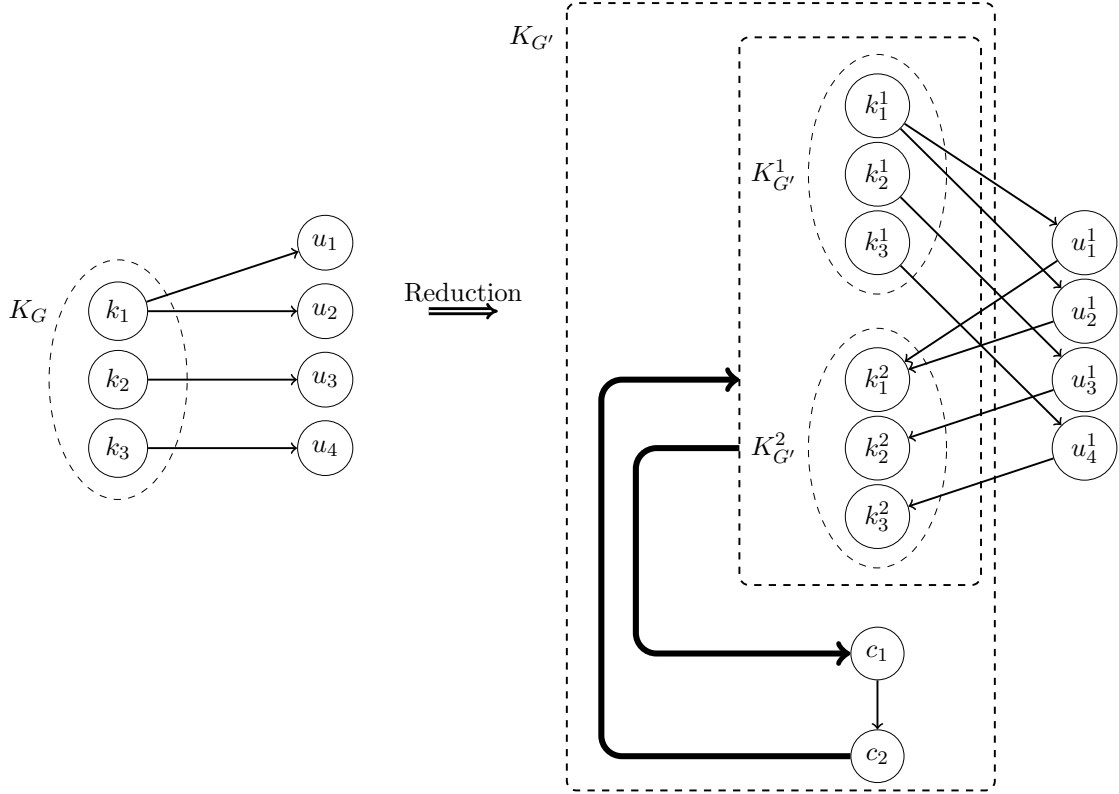
\begin{figure}[ht]
	\begin{adjustbox}{max width=\textwidth}
		\begin{tikzpicture}
			\begin{scope}[shift={(0,-3)}]
				\node[node] (k11) at (0,0) {$k_1$};
				\node[node] (k12) at (0,-1) {$k_2$};
				\node[node] (k13) at (0,-2) {$k_3$};
				\draw[dashed] (0,-1) ellipse (1cm and 1.75cm);
				\node at (-1.3,0) {$K_G$};
				\node[node] (u1) at (3,1) {$u_1$};
				\node[node] (u2) at (3,0) {$u_2$};
				\node[node] (u3) at (3,-1) {$u_3$};
				\node[node] (u4) at (3,-2) {$u_4$};
				\draw[->,thick] (k11) -- (u1);
				\draw[->,thick] (k11) -- (u2);
				\draw[->,thick] (k12) -- (u3);
				\draw[->,thick] (k13) -- (u4);
			\end{scope}

			\draw[very thick,-implies,double equal sign distance] (4.5,-3) -- node[above] {Reduction} (5.5,-3);

			\begin{scope}[shift={(11,0)}]
				\node[node] (k11) at (0,0) {$k_1^1$};
				\node[node] (k12) at (0,-1) {$k_2^1$};
				\node[node] (k13) at (0,-2) {$k_3^1$};
				\draw[dashed] (0,-1) ellipse (1cm and 1.75cm);
				\node at (-1.5,-1) {$K_{G'}^1$};

				\node[node] (k21) at (0,-4) {$k_1^2$};
				\node[node] (k22) at (0,-5) {$k_2^2$};
				\node[node] (k23) at (0,-6) {$k_3^2$};
				\draw[dashed] (0,-5) ellipse (1cm and 1.75cm);
				\node at (-1.5,-5) {$K_{G'}^2$};

				\node[node] (u1) at (3,-2) {$u_1^1$};
				\node[node] (u2) at (3,-3) {$u_2^1$};
				\node[node] (u3) at (3,-4) {$u_3^1$};
				\node[node] (u4) at (3,-5) {$u_4^1$};

				\draw[->,thick] (k11) -- (u1);
				\draw[->,thick] (k11) -- (u2);
				\draw[->,thick] (k12) -- (u3);
				\draw[->,thick] (k13) -- (u4);
				\draw[<-,thick] (k21) -- (u1);
				\draw[<-,thick] (k21) -- (u2);
				\draw[<-,thick] (k22) -- (u3);
				\draw[<-,thick] (k23) -- (u4);

				\draw[thick,dashed,rounded corners] (-2,1) rectangle (1.5,-7);
				\node[node] (c1) at (0,-8) {$c_1$};
				\node[node] (c2) at (0,-9.5) {$c_2$};
				\draw[->,thick] (c1) -- (c2);

				\draw[line width=2.5pt, <-,rounded corners=3mm] (c1) -- +(-3.5,0) -- (-3.5,-5) -- (-2,-5);
				\draw[line width=2.5pt, ->,rounded corners=3mm] (c2) -- +(-4,0) -- (-4,-4) -- (-2,-4);

				\draw[thick,dashed,rounded corners] (-4.5,1.5) rectangle (1.7,-10);
				\node at (-5,1) {$K_{G^\prime}$};
			\end{scope}
		\end{tikzpicture}
	\end{adjustbox}
	\caption{Illustration of our split graph reduction. The two large bold thick arrows indicate that there are arcs from all vertices of
		$K^1_{G^\prime} \cup K^2_{G^\prime}$ to $c_1$ and from $c_2$ to all vertices of $K^1_{G^\prime} \cup K^2_{G^\prime}$\label{fig:split}}
\end{figure}

\subsubsection*{Description of Our Reduction}
Let $(G,S,T)$ be an instance of \TS on an undirected split graph $G=(K_G \cup I_G, E)$, where the disjoint vertex-subsets $K_G$ and $I_G$
respectively induce a clique and an independent set of $G$.

We construct an instance $(G^\prime, S^\prime, T^\prime)$ of \TS on an oriented split graph
$G^\prime = (K_{G^\prime} \cup I_{G^\prime}, E^\prime)$ where the
disjoint vertex-subsets $K_{G^\prime}$ and $I_{G^\prime}$ respectively induce a clique and an independent set of $G^\prime$.
The graph $G^\prime$ is constructed as follows. (See \cref{fig:split}.)

\begin{itemize}
	\item Vertices (we ignore potential edges in this part):
	      \begin{itemize}
		      \item We add two disjoint copies of $K_G$ to $G^\prime$ and name them $K^1_{G^\prime}$ and $K^2_{G^\prime}$.
		            For convenience, the copies of $v \in K_G$ in $K^1_{G^\prime}$ and $K^2_{G^\prime}$ are respectively called $v^1$ and
					$v^2$.
		      \item We add one copy of $I_G$ to $G^\prime$ and name it $I_{G^\prime}$.
		            For convenience, the copy of $v \in I_G$ in $I_{G^\prime}$ is also called $v^1$.
		      \item We add two new vertices $c_1, c_2$ to $G^\prime$ and set $K_{G^\prime} = K^1_{G^\prime} \cup K^2_{G^\prime} \cup \{c_1, c_2\}$.
		      \item $V(G^\prime) = K_{G^\prime} \cup I_{G^\prime}$.
	      \end{itemize}

	\item Edges:
	      \begin{itemize}
		      \item We add the edge $(c_1, c_2)$ to $G^\prime$.
		      \item For each $v \in K^1_{G^\prime} \cup K^2_{G^\prime}$, we add the edges $(v, c_1)$ and $(c_2, v)$.
		      \item For each edge $uv \in E$ where $u, v \in K_G$, the orientations of the edges $u^1v^1$, $u^1v^2$, $u^2v^1$, and $u^2v^2$
			  in $G^\prime$ can be selected arbitrarily.
		      \item For each edge $uv \in E$ where $u \in K_G$ and $v \in I_G$, we add the edges $(u^1, v^1)$ and $(v^1, u^2)$ to $G^\prime$.
	      \end{itemize}
\end{itemize}

For each independent set $S$ of $G$, we define its corresponding independent set $S^\prime$ of $G^\prime$ as follows: for each $v \in S$,
add $v^1$ to $S^\prime$.
By definition of $G^\prime$, $S^\prime$ is indeed independent.
The rest of the graph forms a clique by construction.

\subsubsection*{Correctness of Our Reduction}
The above construction clearly can be done in polynomial time.
\cref{lem:split:left,lem:split:right} show that $(G, S, T)$ is a yes-instance if and only if $(G^\prime, S^\prime, T^\prime)$ is a yes-instance.

\begin{lemma}\label{lem:split:right}
	If $(G,S,T)$ is a yes-instance, then so is $(G^\prime, S^\prime, T^\prime)$.
\end{lemma}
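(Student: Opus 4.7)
The plan is to simulate a given reconfiguration sequence $\langle S = S_0, S_1, \dots, S_\ell = T \rangle$ in $G$ slide by slide, while maintaining the invariant that after simulating each slide the tokens in $G^\prime$ occupy exactly the canonical image $I^\prime = \{v^1 : v \in I\}$ of the current token-set $I$. Concatenating all simulations then yields the desired reconfiguration sequence from $S^\prime$ to $T^\prime$ in $G^\prime$.

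Fix a single slide $u \to v$ taking $S_i$ to $S_{i+1}$. Because $I_G$ is an independent set of $G$, at least one of $u, v$ lies in $K_G$, leaving three cases to handle. If $u \in K_G$ and $v \in I_G$, I perform the single slide $u^1 \to v^1$ along the arc $(u^1, v^1)$. If $u, v \in K_G$, I perform the three slides $u^1 \to c_1 \to c_2 \to v^1$. If $u \in I_G$ and $v \in K_G$, I perform the four slides $u^1 \to v^2 \to c_1 \to c_2 \to v^1$, using the arcs $(u^1, v^2)$, $(v^2, c_1)$, $(c_1, c_2)$, and $(c_2, v^1)$ guaranteed by the construction.

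The main technical step is to check that every intermediate token-set is independent in $G^\prime$. Three structural observations drive the argument: $K_{G^\prime}$ is a clique, so any independent set contains at most one token inside $K_{G^\prime}$; the vertices $c_1, c_2$ have no neighbours in $I_{G^\prime}$; and for any $v \in K_G$, the vertices $v^1$ and $v^2$ share the same neighbourhood in $I_{G^\prime}$, namely the canonical images of $N_G(v) \cap I_G$. The decisive arithmetical observation is that whenever $v \in K_G$, since $v$ must be non-adjacent to every vertex of $S_i \setminus \{u\}$ and $K_G$ is a clique, one has $(S_i \setminus \{u\}) \cap K_G = \emptyset$. Consequently, after removing the token on $u^1$ (if any), the entire clique $K_{G^\prime}$ becomes token-free, so the intermediate vertices $v^2, c_1, c_2$ are all unoccupied, and the only neighbours of $v^1$ or $v^2$ in $I_{G^\prime}$ that could cause a conflict correspond to $N_G(v) \cap (S_i \setminus \{u\})$, which is empty by the independence of $S_{i+1}$ in $G$.

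The case analysis is then routine once these observations are in hand; the main obstacle is just ensuring that the short ``burst'' of slides through $v^2, c_1, c_2$ never collides with a preexisting clique-copy token, and this is precisely what the clique property of $K_{G^\prime}$ together with the identity $(S_i \setminus \{u\}) \cap K_G = \emptyset$ for $v \in K_G$ resolves.
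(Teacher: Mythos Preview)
Your proposal is correct and follows essentially the same approach as the paper: simulate each slide $u\to v$ in $G$ by the identical short bursts through $c_1,c_2$ (and $v^2$ in the $I_G\to K_G$ case), maintaining the canonical image invariant. Your justification for independence of the intermediate sets is in fact slightly more explicit than the paper's, which simply appeals to the observation that at most one token can lie in $K_{G'}$ and that $c_1,c_2$ have no neighbours in $I_{G'}$.
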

\begin{proof}
	Let $\mathcal{I} = \langle S = I_0, I_1, \dots, I_p = T \rangle$ be a reconfiguration sequence between $S$ and $T$ in $G$.
	We shall construct a reconfiguration sequence $\mathcal{I}$ between $S^\prime$ and $T^\prime$ in $G^\prime$.
	As $\mathcal{I}$ is a reconfiguration sequence, for each $i \in \{0, 1, \dots, p-1\}$, there exist $x_i, y_i \in V$ such that
	$I_i \setminus I_{i+1} = \{x_i\}$, $I_{i+1} \setminus I_i = \{y_i\}$, and $x_i y_i \in E$.
	For each $i \in \{0, \dots, p-1\}$, we add to $\mathcal{I}$ a subsequence corresponding to $\langle I_i, I_{i+1} \rangle$ and
	join them together as follows.
	We will require the following invariant: Each subsequence will involve moving exactly one token from a vertex in
	$K^1_{G^\prime} \cup I_{G^\prime}$ to a
	vertex in $K^1_{G^\prime} \cup I_{G^\prime}$, possibly going through some vertices in $K^2_{G^\prime} \cup \{c_1, c_2\}$ if necessary.
	It will be
	clear that we always fulfill it.
	\begin{itemize}
		\item If both $x_i, y_i \in K_G$, we add the subsequence $\langle I_i^\prime, I_{i_1}^\star, I_{i_2}^\star, I_{i+1}^\prime \rangle$
		      where
		      $I_{i_1}^\star, I_{i_2}^\star, I_{i+1}^\prime$ are respectively obtained from their predecessors by the token-slides
		      $x_i^1 \to c_1$, $c_1 \to c_2$, and $c_2 \to y_i^1$.
		\item If $x_i \in K_G$ and $y_i \in I_G$, we simply add the subsequence $\langle I_i^\prime, I_{i+1}^\prime \rangle$.
		      That is, $I_{i+1}^\prime$ is obtained from $I_i^\prime$ by the token-slide $x_i^1 \to y_i^1$.
		\item If $x_i \in I_G$ and $y_i \in K_G$, we note that it is impossible to directly slide the token on $x_i^1$ to $y_i^1$, because
		      by the construction of
		      $G^\prime$, only the edge $(y^1_i, x^1_i)$ is in $E^\prime$.
		      On the other hand, also by the construction of $G^\prime$, $(x^1_i, y^2_i) \in E^\prime$.
		      Thus, we add the subsequence $\langle I_i^\prime, I_{i_1}^\star, I_{i_2}^\star, I_{i_3}^\star, I_{i+1}^\prime \rangle$ where
		      $I_{i_1}^\star, I_{i_2}^\star, I_{i_3}^\star, I_{i+1}^\prime$ are respectively obtained from their predecessors by the
		      token-slides $x^1_i \to y^2_i$,
		      $y^2_i \to c_1$, $c_1 \to c_2$, $c_2 \to y^1_i$.
	\end{itemize}
	Observe that there can be at most one token in $K_{G^\prime}$ and both $c_1$ and $c_2$ are not adjacent to any vertex in $I_{G^\prime}$.
	Combining with the definitions of $I^\prime_i, I^\prime_{i+1}$ and the assumption that sliding a token from $x_i$ to $y_i$ in $G$
	always result an
	independent set, one can verify that each intermediate token-set $I^\star_{i_j}$ in the described subsequences is independent in
	$G^\prime$.
	Naturally, two subsequences are joined at their common end (which is either $I^\prime_i$ or $I^\prime_{i+1}$).
	Joining all these subsequences gives us the desired reconfiguration sequence $\mathcal{I}$.
\end{proof}

\begin{lemma}\label{lem:split:left}
	If $(G^\prime,S^\prime,T^\prime)$ is a yes-instance, then so is $(G, S, T)$.
\end{lemma}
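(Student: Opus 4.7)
The plan is to reverse the construction via a projection: I map each independent set of $G^\prime$ back to one of $G$ and show that every $G^\prime$-reconfiguration step becomes, after projection, either a no-op or a valid $G$-slide; concatenating the non-trivial projections yields the desired sequence $\mathcal{I}$.

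Concretely, I would define $\pi\colon V(G^\prime)\to V(G)\cup\{\bot\}$ by $\pi(v^1)=v$ for $v^1\in I_{G^\prime}$, $\pi(v^j)=v$ for $v^j\in K^j_{G^\prime}$ ($j\in\{1,2\}$), and $\pi(c_1)=\pi(c_2)=\bot$, extended to token-sets by taking the image and discarding $\bot$. A routine verification shows that $\pi(I^\star)$ is independent in $G$ whenever $I^\star$ is independent in $G^\prime$: if two distinct vertices of $\pi(I^\star)$ were adjacent in $G$, then either both lie in $K_G$, in which case any two preimages sit in the clique $K_{G^\prime}$, or one is in $K_G$ and the other in $I_G$, in which case one of the arcs $(u^1,v^1),(v^1,u^2)$ from the construction already makes their preimages adjacent. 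Because $K_{G^\prime}$ is a clique, any independent set of $G^\prime$ has at most one token in $K_{G^\prime}$, so $|\pi(I^\star)|\in\{|S|,|S|-1\}$, with the deficit occurring exactly when that token sits on $c_1$ or $c_2$.

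Next I run through the token-slides of $G^\prime$ and track what each does to $\pi$. A slide $u^1\to v^1$ or $v^1\to u^2$ (with $u\in K_G$, $v\in I_G$, $uv\in E$) projects to the valid $G$-slide $u\to v$ or $v\to u$. A slide inside $K^1_{G^\prime}\cup K^2_{G^\prime}$ projects either to the identity or to a slide $v\to w$ between two distinct vertices of the clique $K_G$, which is valid in $G$. The remaining case, slides touching $\{c_1,c_2\}$, I handle by observing that such a visit is \emph{atomic}: as soon as a token sits on $c_1$ (or $c_2$), no other token can occupy any vertex of the clique $K_{G^\prime}$, so no slide $v^1\to u^2$ from $I_{G^\prime}$ is available (it would force a second token into $K_{G^\prime}$), and $I_{G^\prime}$ being independent rules out slides inside $I_{G^\prime}$. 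Consequently the transit is forced to be exactly $u^a\to c_1,\ c_1\to c_2,\ c_2\to w^b$, whose net effect on $\pi$ is the slide $u\to w$—valid in $G$ because $u,w\in K_G$ and $K_G$ is a clique—or a no-op if $u=w$.

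With these pieces in hand, $\mathcal{I}$ is obtained by scanning $\mathcal{I}^\star$, projecting each configuration, and deleting repetitions. The resulting sequence consists of independent sets of size $|S|$, begins at $\pi(S^\prime)=S$, ends at $\pi(T^\prime)=T$, and successive terms differ by a valid $G$-slide, so it witnesses that $(G,S,T)$ is a yes-instance. I expect the main subtlety to lie in the transit argument—rigorously verifying that no other token can move while a token sits on $c_1$ or $c_2$, so that the three transit slides really are consecutive in $\mathcal{I}^\star$—while everything else reduces to a short case analysis on the arcs of $G^\prime$.
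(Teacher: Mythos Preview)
Your argument is correct and rests on the same key observation as the paper's proof: once a token sits in the clique $K_{G'}$ (in particular on $c_1$ or $c_2$), every other token is trapped in $I_{G'}$ and cannot move, so the detour is made by a single token and can be collapsed to one slide in $G$. The packaging differs slightly. The paper takes as ``checkpoints'' only those configurations contained in $K^1_{G'}\cup I_{G'}$ and then argues that between two consecutive checkpoints a single token wanders through $K^2_{G'}\cup\{c_1,c_2\}$ and returns; this yields at most two $G$-slides per segment. Your projection $\pi$ sends both $v^1$ and $v^2$ to $v$, so configurations with a token in $K^2_{G'}$ are handled step-by-step and only the three-move $c_1$--$c_2$ transit needs bundling. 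Your route is arguably cleaner, since more $G'$-slides project to a single $G$-slide without any segment analysis.

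One small wording issue: in the final paragraph you say $\mathcal{I}$ is obtained by ``projecting each configuration and deleting repetitions,'' but the two intermediate projections during a $c_1$--$c_2$ transit have size $|S|-1$ and are not repetitions of their neighbours, so they would survive that procedure. You already have the fix earlier (treat the three transit slides as a single block with net effect $u\to w$); just make the final description consistent with that, e.g.\ by saying you retain only the size-$|S|$ projections, which are exactly the configurations with no token on $c_1$ or $c_2$.
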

\begin{proof}
	Let $\mathcal{I} = \langle S^\prime = J_0, J_1, \dots, J_q = T^\prime \rangle$ be a reconfiguration sequence between $S^\prime$
	and $T^\prime$ in $G^\prime$ with our previous definition of $S',T'$.
	Let $\langle J_{i_0}, J_{i_1}, \dots, J_{i_p} \rangle$ be the maximum-length subsequence extracted from $\mathcal{I}$ such that $i_0 = 0$, $i_p = q$,
	$1 \leq i_1 < i_2 < \dots < i_{p-1} \leq q-1$, and $J_{i_j} \subseteq (K^1_{G^\prime} \cup I_{G^\prime})$ for $1 \leq j \leq p-1$.
	By definition, each $J_{i_j}$ ($1 \leq j \leq p-1$) has a corresponding independent set in $G$, say $I_j$, which can be obtained as follows: for each
	$u^1 \in J_{i_j}$, add $u$ to $I_j$.
	Our desired reconfiguration sequence $\mathcal{I}$ is constructed by joining reconfiguration sequences in $G$ between $I_j$ and $I_{j+1}$ as follows.

	If $J_{i_j}$ and $J_{i_{j+1}}$ ($1 \leq j \leq p-2$) are consecutive in $\mathcal{I}$, for the corresponding independent sets $I_j$ and $I_{j+1}$ in
	$G$, it follows that $J_{i_{j+1}}$ is obtained from $J_{i_j}$ by a token-slide between two vertices in $K^1_{G^\prime} \cup I_{G^\prime}$.
	Thus, $I_{j+1}$ can be obtained from $I_j$ in $G$ by the corresponding token-slide in $G$.
	To complete our proof, it remains to derive the same conclusion for the case that $J_{i_j}$ and $J_{i_{j+1}}$ ($1 \leq j \leq p-2$) are
	not consecutive in $\mathcal{I}$.

	Note that once a token $t$ is placed on a vertex $u^2 \in K^2_{G^\prime}$ by some token-slide in $\mathcal{I}$, one must keep moving $t$ until it is
	placed on some vertex $v^1 \in K^1_{G^\prime}$. There are two reasons.
	First, since $t$ is on a vertex of $K_{G^\prime}$, one cannot move any other token, because moving any other token $t^\prime$ must involve sliding $t^\prime$
	from a vertex in $I_{G^\prime}$ to a vertex in $K_{G^\prime}$, which cannot be done while $t$ is still in $K_{G^\prime}$.
	Second, since there is no edge directed from any vertex in $K^2_{G^\prime} \cup \{c_1, c_2\}$ to a vertex in $I_{G^\prime}$, it is impossible to directly
	slide $t$ to any vertex in $I_{G^\prime}$ without going through a vertex in $K^1_{G^\prime}$ first.

	As a result, if $J_{i_j}$ and $J_{i_{j+1}}$ ($1 \leq j \leq p-2$) are not consecutive in $\mathcal{I}$, the following scenario must
	happen: exactly
	one token $t$ that is originally in $J_{i_j}$ may be slid from some vertex $u^1 \in K^1_{G^\prime} \cup I_{G^\prime}$ to some vertex
	$v^2 \in K^2_{G^\prime}$
	and then moved around among vertices in $K_{G^\prime}$ but finally must arrive at a vertex $w^1 \in K^1_{G^\prime} \cap J_{i_{j+1}}$.
	Moreover, while $t$ is moving in $G^\prime$, no other token can move.
	Therefore, $I_{j+1}$ can be obtained from $I_j$ by iteratively applying the token-slides $u \to v$ and $v \to w$ in $G$.
	The former token-slide is possible because a token from $u^1$ can be moved to $v^2$ in $G^\prime$.
	The latter is possible because both $v$ and $w$ are in $K_G$.
	(Here, it does not matter how the corresponding token is moved in $G^\prime$ from $v^2$ to $w^1$.)
	Additionally, we note that the former token-slide is necessary only when $u^1 \in I_{G^\prime}$; otherwise, both $u$ and $w$ are in
	$K_G$ and one single
	token-slide $u \to w$ in $G$ is enough.
\end{proof}

\subsection{Oriented Bipartite Graphs}\label{sec:bipartite}

This section is devoted to proving the following theorem.
\begin{theorem}\label{thm:bipartite}
	\TS is \PSPACE-complete on oriented bipartite graphs.
\end{theorem}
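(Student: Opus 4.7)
The plan is to reduce from undirected \TS on bipartite graphs, which is \PSPACE-complete by~\cite{BelmonteKLMOS21}. Given an instance $(G, S, T)$ with $G = (A \cup B, E)$ bipartite, I would construct an oriented bipartite graph $G^\star$ by replacing every undirected edge $uv \in E$ (with $u \in A$, $v \in B$) by a small oriented bipartite gadget $H_{uv}$ that shares exactly the endpoints $u,v$ with the rest of $G^\star$. The gadget contains the direct arc $u \to v$ (so that $\{u,v\}$ is not independent in $G^\star$) together with a reverse oriented path $v \to w_1^{uv} \to w_2^{uv} \to u$ whose two internal vertices are placed on opposite sides of the bipartition ($w_1^{uv}$ on the $A$-side, $w_2^{uv}$ on the $B$-side). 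This keeps $G^\star$ bipartite and, since no pair of arcs is antiparallel, oriented. I set $S^\star := S$ and $T^\star := T$, viewed as subsets of $V(G) \subseteq V(G^\star)$.

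For the forward direction, each undirected slide of a reconfiguration sequence in $G$ is simulated in $G^\star$ by either the direct arc $u \to v$ (when the token moves from $u$ to $v$) or the three-arc sequence $v \to w_1^{uv} \to w_2^{uv} \to u$ (when the token moves from $v$ to $u$). Verifying that each intermediate configuration is independent is a routine case analysis using the small neighborhoods of the internal gadget vertices together with the assumption that the simulated slide is valid in $G$.

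The hard part will be the reverse direction, where one must translate any reconfiguration sequence in $G^\star$ back into a sequence in $G$. My plan is to define a canonical projection $\pi \colon V(G^\star) \to V(G)$ that sends each internal gadget vertex to the endpoint on the same side of the bipartition (so $\pi(w_1^{uv}) = u$ and $\pi(w_2^{uv}) = v$), and to prove by induction along the sequence that every reachable configuration $I^\star$ projects to an independent set $\pi(I^\star)$ of $G$, with each $G^\star$-slide either leaving $\pi(I^\star)$ unchanged (an ``internal'' move inside a gadget) or realising a valid $G$-slide. The main technical obstacle is to rule out configurations such as $\{u, w_1^{uv'}\}$ for another edge $uv' \in E$ with $v' \neq v$: these are independent in $G^\star$ but their projection $\{u, v'\}$ is not independent in $G$. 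I expect this to be handled by enriching $H_{uv}$ with additional oriented cross-edges linking its internal vertices to the original endpoints of gadgets sharing the vertex $u$ (or $v$); all such cross-edges can be placed between the two sides of the bipartition and oriented consistently with the existing arcs, preserving both bipartiteness and orientedness. The final correctness proof would then proceed by a gadget-weight induction in the spirit of the proof of \cref{lem:planar:left}.
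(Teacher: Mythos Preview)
Your overall strategy matches the paper's: reduce from undirected \TS on bipartite graphs and replace each edge by a directed $4$-cycle so that a token can traverse it in either direction. The crucial difference is \emph{where} the two extra vertices live. The paper introduces a single companion $v'$ for each vertex $v$, shared by all incident edges, and for an arc $(u,v)$ closes the $4$-cycle $u\to v\to u'\to v'\to u$. This immediately gives $N_{G'}(v)=N_{G'}(v')$ for every $v$, after which the reverse direction is a two-line argument: map each of $v,v'$ to $v$, and every $G'$-slide becomes either a redundant repetition or a legal $G$-slide, with no gadget-weight induction needed.

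Your per-edge vertices $w_1^{uv},w_2^{uv}$ do not enjoy this neighbourhood equality, and the patch you sketch has a genuine gap. First, your projection is internally inconsistent: you define $\pi(w_1^{uv})=u$ (same side of the bipartition), yet in your obstacle $\{u,w_1^{uv'}\}$ you claim the projection is $\{u,v'\}$; under your own map it is the collapsed singleton $\{u\}$, so the actual failure mode is two tokens projecting to the \emph{same} vertex, not to adjacent ones. Second, cross-edges ``to the original endpoints of gadgets sharing $u$ (or $v$)'' do not suffice: a pair of internal vertices from different gadgets, say $w_1^{uv''}$ and $w_2^{u''v}$ with $uv\in E(G)$, remain non-adjacent while projecting to the adjacent pair $u,v$, and nothing in your invariant rules such configurations out. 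Closing all these holes forces you either to add internal-to-internal edges (which your fix explicitly excludes) or to identify all the $w_1^{uv''}$ for fixed $u$ into a single vertex---which is exactly the paper's per-vertex construction. (Minor point: the \PSPACE-hardness of undirected bipartite \TS is~\cite{LokshtanovM19}; \cite{BelmonteKLMOS21} is the split-graph result.)
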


A result by \textcite{LokshtanovM19} shows that \TS is \PSPACE-complete on undirected bipartite graphs.
We extend this result to oriented bipartite graphs.

\subsection*{Description of Our Reduction}

Let $(G, S, T)$ be a \TS instance on an undirected bipartite graph $G = (V, E)$.
We describe how to construct an instance $(G', S', T')$ on oriented bipartite graphs $G' = (V', E')$.

For simplicity, we will reuse the vertex names from $V$ to construct $V'$.
More precisely, we proceed as follows.
(See \cref{fig:bipartite}.)
\begin{enumerate}
	\item For each vertex $v\in V$, copy $v$ to $V^\prime$ and a new corresponding vertex $v'$.
	\item For each edge $uv \in E(G)$, add one of the directed edges $(u, v)$ or $(v, u)$ to $E'$.
	\item For each directed edge $(u, v) \in E'$ where $u, v \in V$, complete the directed $4$-cycle $(u,v)$,
	      $(v,u')$, $(u',v')$, $(v',u)$ in $E'$.
\end{enumerate}

\begin{figure}[ht]
	\centering
	\begin{adjustbox}{max width=\textwidth}
		\begin{tikzpicture}
			\node[node] (u) at (0,1) {$u$};
			\node[node] (v) at (2,1) {$v$};
			\node[node] (w) at (4,1) {$w$};
			\draw[edge,-] (u) -- (v);
			\draw[edge,-] (v) -- (w);

			\draw[very thick,-implies,double equal sign distance] (5.5,1) -- node[above] {Reduction}(6.5,1);
			\begin{scope}[shift={(8,0)}]
				\node[node] (u) at (0,2) {$u$};
				\node[node] (u') at (0,0) {$u'$};
				\node[node] (v) at (2,2) {$v$};
				\node[node] (v') at (2,0) {$v'$};
				\node[node] (w) at (4,2) {$w$};
				\node[node] (w') at (4,0) {$w'$};

				\draw[edge] (u) -- (v);
				\draw[edge] (v) -- (u');
				\draw[edge] (u') -- (v');
				\draw[edge] (v') -- (u);
				\draw[edge] (v) -- (w);
				\draw[edge] (w) -- (v');
				\draw[edge] (v') -- (w');
				\draw[edge] (w') -- (v);
			\end{scope}
		\end{tikzpicture}
	\end{adjustbox}
	\caption{Bipartite Gadget\label{fig:bipartite}}
\end{figure}
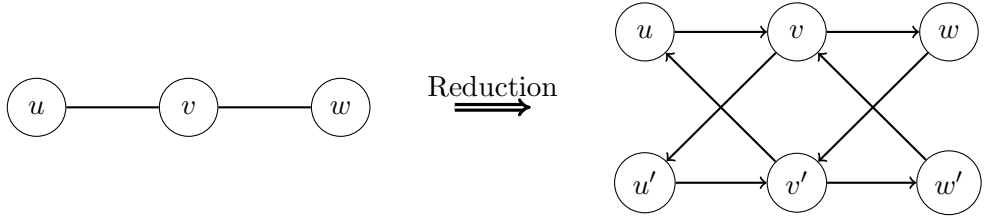

Naturally, we set $S' = S$ and $T' = T$.
Clearly, this construction can be done in  polynomial time.
One can verify that the constructed graph $G' = (V', E')$ is an oriented bipartite graph and both $S'$ and $T'$ are independent in $G'$.

\subsection*{Correctness of Our Reduction}

Before showing that our reduction is correct, we prove the following simple observation.

\begin{lemma}\label{clm:planar:neighbour}
	$N_{G'}(v)=N_{G'}(v')$ for all $v\in V'$.
\end{lemma}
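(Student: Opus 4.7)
The plan is to verify this identity by direct inspection of the gadget construction, with essentially no combinatorial work beyond bookkeeping. First I would reduce to the case $v \in V$ (i.e., an unprimed copy): taking the natural convention $(v')' = v$, the identity $N_{G'}(v) = N_{G'}(v')$ is symmetric under the involution swapping primed and unprimed vertices, so proving it for unprimed $v$ automatically gives the primed case.

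Next, I would fix $v \in V$ and enumerate $N_{G'}(v)$ by walking through the gadgets. By Step~3 of the construction, every edge in $E'$ is introduced by some gadget corresponding to an original edge $uv \in E(G)$, and each such gadget is a directed $4$-cycle on the four vertices $\{u, v, u', v'\}$. The key observation is that, regardless of which orientation of the original edge is chosen in Step~2, the underlying undirected edge-set of the $4$-cycle is always $\{uv,\ vu',\ u'v',\ v'u\}$. In particular, from a single gadget both $v$ and $v'$ acquire exactly $\{u, u'\}$ as undirected neighbours, while $v$ and $v'$ do not themselves become adjacent (nor do $u$ and $u'$).

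Summing this contribution over all $uv \in E(G)$ incident to $v$ then gives
\[
N_{G'}(v) = \{u, u' : u \in N_G(v)\},
\]
and an identical analysis with $v'$ in the role of $v$ produces the same set, so $N_{G'}(v) = N_{G'}(v')$.

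There is no genuine obstacle here; the only care required is to confirm that no edges outside the gadgets contribute to the neighbourhoods (immediate from Step~3, which is the sole source of edges in $E'$) and that the two possible orientations of the original edge in Step~2 produce the same underlying undirected gadget (visible by inspection of the $4$-cycle). Everything else is a routine enumeration over the incident gadgets.
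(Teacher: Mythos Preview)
Your proposal is correct and follows essentially the same approach as the paper: a direct inspection of the gadget showing that within each $4$-cycle on $\{u,v,u',v'\}$ the vertices $v$ and $v'$ acquire the same undirected neighbours $\{u,u'\}$, and then summing over incident gadgets. Your write-up is in fact more careful than the paper's, which gives only a two-line sketch; the only minor inaccuracy is that Step~2 also contributes an edge (the first arc of each $4$-cycle), but since that edge is part of the gadget your conclusion is unaffected.
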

\begin{proof}
	If there is an edge $(u,v)$ then we know by the construction there is an edge $(u,v')$. As we do not add edges to $v'$ unless
	preceded by an edge to $v$, the neighborhoods are equal.
\end{proof}

We are now ready to show that $(G, S, T)$ is a yes-instance if and only if $(G', S', T')$ is a yes-instance.

\begin{lemma}\label{lem:bipartite:left}
	If $(G, S, T)$ is a yes-instance, then so is $(G', S', T')$.
\end{lemma}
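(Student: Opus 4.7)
The plan is to simulate each slide of a given reconfiguration sequence $\mathcal{I} = \langle S = I_0, I_1, \dots, I_\ell = T \rangle$ in $G$ by a short sequence of valid slides in $G'$, while maintaining the invariant that after simulating the $i$-th original slide, every token sits on an unprimed vertex of $V \subseteq V'$ whose label matches the corresponding token in $I_i$. Since $S' = S$ and $T' = T$, this invariant forces the initial and final configurations of the constructed sequence to coincide with $S'$ and $T'$ respectively.

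For each slide $u \to v$ along an undirected edge $uv \in E$, I would split into two cases based on the orientation chosen in step 2 of the construction. If $(u,v) \in E'$, I would simply perform the slide $u \to v$ in $G'$. Otherwise $(v,u) \in E'$, in which case the $4$-cycle completion in step 3 guarantees that $(u,v'), (v',u'), (u',v) \in E'$; I would then perform the three successive slides $u \to v' \to u' \to v$, ending with the token back on the unprimed target $v$ and thereby restoring the invariant.

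The main step is to verify that each intermediate token set in the second case is independent in $G'$. This is where \cref{clm:planar:neighbour} plays a central role: since $N_{G'}(x) = N_{G'}(x')$ for every $x \in V$, placing a token on a primed vertex $x'$ blocks exactly the same vertices in $G'$ as placing it on $x$. Combined with the fact that the slide $u \to v$ is valid in $G$, which yields $N_G(v) \cap (I_i \setminus \{u\}) = \emptyset$ and $N_G(u) \cap I_i = \emptyset$, one obtains that each of the three intermediate configurations $(I_i \setminus \{u\}) \cup \{v'\}$, $(I_i \setminus \{u\}) \cup \{u'\}$, and $(I_i \setminus \{u\}) \cup \{v\}$ is independent in $G'$.

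Concatenating all such subsequences yields the desired reconfiguration sequence from $S'$ to $T'$ in $G'$. The main obstacle is precisely the detour case; without the neighborhood-equality lemma, one would need to separately rule out occupations of primed neighbors at each intermediate step, which would be tedious. With \cref{clm:planar:neighbour} in hand, the verification reduces to independence conditions that already hold in $I_i$, so the simulation goes through with no further complications.
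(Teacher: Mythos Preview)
Your proposal is correct and follows essentially the same approach as the paper: the same case split on the orientation of $uv$, the same three-step detour $u \to v' \to u' \to v$ when $(v,u) \in E'$, and the same appeal to \cref{clm:planar:neighbour} to certify independence of the intermediate configurations. Your explicit invariant that tokens always return to unprimed vertices after each simulated slide is a nice way to phrase what the paper leaves implicit.
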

\begin{proof}
	Let $\mathcal{I}$ be a reconfiguration sequence that transforms $S$ into $T$ in $G$.
	We describe how to construct a reconfiguration sequence $\mathcal{I}'$ that transforms $S'$ into $T'$ in $G'$.
	For each step $u \to v$ in $\mathcal{I}$, we construct a corresponding sequence $S'_{uv}$ of token-slides in $\mathcal{I}'$ as follows: If $(u, v) \in E'$, we set $S'_{uv} = \langle u \to v \rangle$ to $\mathcal{I}'$, and if $(v, u) \in E'$, we set
	$S'_{uv} = \langle u \to v', v' \to u', u' \to v \rangle$.
	The sequence $\mathcal{I}'$ is obtained by iteratively concatenating these $S'_{uv}$ sequences.

	To see that $\mathcal{I}'$ is a reconfiguration sequence in $G'$, it suffices to verify that the token-slides in $S'_{uv}$ are valid in $G'$, i.e., each of them results an independent set.
	Let $I$ and $J$ be two independent sets of $G$ where $J$ is obtained from $I$ by the token-slide $u \to v$.
	Thus, $N_G[v]$ contains exactly one token on $u$.
	We claim that $S'_{uv}$ transforms $I$ into $J$ in $G'$.
	If $(u, v) \in E'$, since $N_G[v]$ contains exactly one token on $u$, our construction implies that $N_{G'}[v]$ contains exactly one token on $u$ and thus $S'_{uv} = \langle u \to v \rangle$ is our desired reconfiguration sequence.
	If $(v, u) \in E'$, by \cref{clm:planar:neighbour}, we have $N_{G'}[v']$ contains exactly one token on $u$, so the token-slide $u \to v'$ is valid.
	At this point, note that $N_{G'}[u]$ contains exactly one token on $v'$, and by \cref{clm:planar:neighbour}, so does $N_{G'}[u']$.
	Therefore, the token-slide $v' \to u'$ is valid. Similarly, the token-slide $u' \to v$ is valid. Our proof is complete.
\end{proof}

\begin{lemma}\label{lem:bipartite:right}
	If $(G', S', T')$ is a yes-instance, then so is $(G, S, T)$.
\end{lemma}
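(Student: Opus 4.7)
The plan is to project the $G'$ reconfiguration sequence down to $G$ via the collapse map $\pi\colon V' \to V$ defined by $\pi(v) = v$ for $v \in V$ and $\pi(v') = v$ for each primed copy. Given a reconfiguration sequence $\mathcal{I}' = \langle S' = J_0, J_1, \ldots, J_q = T' \rangle$ in $G'$, I would argue that $\langle \pi(J_0), \pi(J_1), \ldots, \pi(J_q) \rangle$, after suppressing any stationary repetitions, is a valid reconfiguration sequence from $S$ to $T$ in $G$.

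The central invariant to establish is that no $J_i$ contains both $v$ and $v'$ for any $v \in V$; I would prove this by induction on $i$, with the base case following from $S' \subseteq V$. For the inductive step, write $J_{i+1} = J_i - w + x$. If $x \notin \{v, v'\}$, then both $v, v'$ must already lie in $J_i$, contradicting the hypothesis. The case $x = v$ with $v' \in J_i$, and its symmetric counterpart $x = v'$ with $v \in J_i$, is handled via \cref{clm:planar:neighbour}: the arc $(w, v) \in E'$ forces $w \in N_{G'}(v) = N_{G'}(v')$, so $w$ and $v'$ are adjacent vertices both present in $J_i$, contradicting its independence.

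Granting the invariant, I would verify that projection commutes with moves. For each step $w \to x$ in $\mathcal{I}'$, three facts are needed: (i) $\pi(w) \neq \pi(x)$, because $v$ and $v'$ sit at opposite corners of every $4$-cycle gadget and are never adjacent in $G'$; (ii) $\pi(w)\pi(x) \in E$, by a case analysis on the four kinds of directed edges produced by the construction, each tracing back to a genuine edge of $G$; and (iii) $\pi(x) \notin \pi(J_i)$, since otherwise the other preimage of $\pi(x)$ would remain in $J_{i+1}$ alongside $x$, violating the invariant for $J_{i+1}$. Combined, these show $\pi(J_{i+1}) = \pi(J_i) - \pi(w) + \pi(x)$; moreover each $\pi(J_i)$ is independent in $G$, because any adjacency among its elements would lift to an adjacency among preimages in the gadget for that edge (the only non-adjacent pairs in such a gadget are $v$--$v'$ and $u$--$u'$, which are ruled out by the invariant).

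The main obstacle I expect is the non-coexistence invariant itself. Structurally, $v$ and $v'$ are false twins in $G'$ and an arbitrary independent set of $G'$ can contain both; thus the impossibility is genuinely a reachability statement, requiring both the neighborhood identity of \cref{clm:planar:neighbour} and the hypothesis that the starting set $S'$ lies entirely in $V$. Once this invariant is in place, the rest of the argument reduces to routine bookkeeping about the gadget.
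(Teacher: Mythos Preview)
Your proposal is correct and follows essentially the same projection approach as the paper: the paper defines the identical collapse map (sending each of $u,u',v,v'$ to the corresponding unprimed vertex), concatenates the projected slides, and then simply asserts that ``one can verify'' the result is a valid reconfiguration sequence in $G$. You have filled in exactly the verification the paper omits---most notably the non-coexistence invariant that no reachable $J_i$ contains both $v$ and $v'$, proved via \cref{clm:planar:neighbour}---so your argument is strictly more detailed but not a different route.
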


\begin{proof}
	Let $\mathcal{I}'$ be a reconfiguration sequence that transforms $S'$ into $T'$ in $G'$.
	We describe how to construct a reconfiguration sequence $\mathcal{I}$ that transforms $S$ into $T$ in $G$.
	For each step $x \to y$ in $\mathcal{I}'$, we construct a corresponding token-slide $S_{xy}$ of token-slides in $\mathcal{I}$ as follows.
	We set $S_{xy} = u \to v$ ($u, v \in V(G)$) if one of the following cases happens:
	\begin{itemize}
		\item $x = u \in V(G)$ and $y = v \in V(G)$.
		\item $x = u \in V(G)$ and $y = v' \in V(G') - V(G)$.
		\item $x = u' \in V(G') - V(G)$ and $y = v \in V(G)$.
		\item $x = u' \in V(G') - V(G)$ and $y = v' \in V(G') - V(G)$.
	\end{itemize}
	The sequence $\mathcal{I}$ is constructed by iteratively concatenating $S_{xy}$ and removing redundant token-slides that appear consecutively more than once.
	One can verify from our construction that $\mathcal{I}$ is indeed a reconfiguration sequence.
	Our proof is complete.
\end{proof}

\subsubsection*{A Simple Corollary}
Note that if our original graph $G$ is \textit{bounded treewidth} (i.e., its treewidth is at most some positive constant $c$), then so is our constructed graph $G'$.
Thus, the same reduction holds for oriented bounded treewidth graphs.
Additionally, \textcite{Wrochna18} proved that there exists a constant $c$ such that \TS is \PSPACE-complete on graphs with treewidth
at most $c$.\footnote{Indeed, Wrochna showed a more general result for graphs with bandwidth at most $c$. Since any graph with bandwidth at
most $c$ also has treewidth and pathwidth at most $c$, his result also applies to bounded treewidth graphs.}
Therefore, we have the following corollary.
\begin{corollary}
	There exists a constant $c$ such that \TS is \PSPACE-complete on oriented graphs having treewidth at most $c$.
\end{corollary}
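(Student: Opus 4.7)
The plan is to combine Wrochna's result with the bipartite reduction from the preceding subsection. The paper has already hinted that the same reduction $(G, S, T) \mapsto (G', S', T')$ used for \cref{thm:bipartite} should work, so the only genuinely new content is the treewidth bound on the constructed graph $G'$. I would therefore structure the proof in three short steps: (i) fix Wrochna's constant, (ii) invoke the bipartite reduction verbatim and cite the correctness lemmas \cref{lem:bipartite:left,lem:bipartite:right}, and (iii) prove that the reduction blows up treewidth by at most a constant factor.

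For step (iii), I would start from a tree decomposition $(T, \{X_t\}_{t \in V(T)})$ of the input graph $G$ of width at most $c$ (where $c$ is Wrochna's constant) and construct a tree decomposition of the underlying undirected graph of $G'$ by the \emph{doubling rule}: replace each bag $X_t$ with $X_t' := X_t \cup \{v' : v \in X_t\}$, keeping the tree structure unchanged. Each new bag has size at most $2(c+1)$, so the width is at most $2c+1$. I then verify the three tree-decomposition axioms. Vertex coverage is immediate: every $v \in V(G)$ lies in some $X_t$, hence in $X_t'$, and $v'$ lies in the same $X_t'$. For edge coverage, every edge in the gadget replacing an edge $uv \in E(G)$ has both endpoints in $\{u, v, u', v'\}$, which is contained in any bag that originally contained both $u$ and $v$. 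For connectivity, the subtree of bags containing $v$ is the same as before, and the subtree of bags containing $v'$ coincides with it by construction.

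Once the treewidth bound is in hand, the rest is bookkeeping: setting $c' := 2c+1$, the reduction maps bounded-treewidth undirected instances to oriented instances of treewidth at most $c'$ in polynomial time, and by \cref{lem:bipartite:left,lem:bipartite:right} it preserves yes/no answers. Combined with the remark from \cref{sec:prelim} that \TS on oriented graphs lies in \PSPACE{}, this yields \PSPACE-completeness on oriented graphs of treewidth at most $c'$.

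I do not anticipate any real obstacle. The one place where care is needed is making sure that every edge in the 4-cycle gadget $(u,v), (v,u'), (u',v'), (v',u)$ is covered by the new decomposition; this is why the doubling rule must add \emph{both} $u'$ and $v'$ to any bag that already contains $u$ and $v$, rather than (say) placing $u'$ and $v'$ in separate leaves. With the doubling rule phrased bag-wise as above, this coverage is automatic.
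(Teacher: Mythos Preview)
Your proposal is correct and follows exactly the paper's approach: invoke Wrochna's result and observe that the bipartite reduction preserves bounded treewidth, citing \cref{lem:bipartite:left,lem:bipartite:right} for correctness. The paper merely asserts the treewidth preservation without proof, whereas you supply the explicit doubling-of-bags argument (yielding width at most $2c+1$); this is a welcome elaboration but not a different route.
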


\section{Polynomial-Time Results}\label{sec:poly}

\subsection{Oriented Cycles}\label{sec:cycle}
\begin{definition}\label{def:direct-dist}
	We define the \emph{directed distance}, $\ddir(u,v)$, between two vertices as
	\[
		\ddir(u,v) =  \begin{cases}
			k      & \text{if the shortest directed path from $u$ to $v$ has length $k$,} \\
			\infty & \text{otherwise.}
		\end{cases}
	\]
\end{definition}

\begin{definition}\label{def:cycle:locked}
	We call a token sliding problem $(G,S,T)$ on an oriented cycle $G$ \emph{locked} if the following holds:
	\begin{itemize}
		\item For all vertices $u,v$, $\ddir(u,v)< \infty$.
		\item For every $s\in S$ there exists an $s'\in S$ with $\ddir(s,s')=2$.
		\item $S\neq T$.
	\end{itemize}
\end{definition}

\begin{figure}
	\centering
	\begin{subfigure}{0.3\textwidth}
		\centering
		\begin{tikzpicture}[scale=0.5]
			\node[draw,circle,fill] (1) at (1*360/6: 2cm) {};
			\node[draw,circle,accepting] (2) at (2*360/6: 2cm) {};
			\node[draw,circle,fill] (3) at (3*360/6: 2cm) {};
			\node[draw,circle,accepting] (4) at (4*360/6: 2cm) {};
			\node[draw,circle,fill] (5) at (5*360/6: 2cm) {};
			\node[draw,circle,accepting] (6) at (6*360/6: 2cm) {};

			\draw[thick,->] (1) -- (2);
			\draw[thick,->] (2) -- (3);
			\draw[thick,->] (3) -- (4);
			\draw[thick,->] (4) -- (5);
			\draw[thick,->] (5) -- (6);
			\draw[thick,->] (6) -- (1);
		\end{tikzpicture}
		\caption{Locked Cycle\label{fig:locked-cycle}}
	\end{subfigure}
	\begin{subfigure}{0.3\textwidth}
		\centering
		\begin{tikzpicture}[scale=0.5]
			\node[draw,circle,fill] (1) at (1*360/6: 2cm) {};
			\node[draw,circle] (2) at (2*360/6: 2cm) {};
			\node[draw,circle] (3) at (3*360/6: 2cm) {};
			\node[draw,circle,accepting] (4) at (4*360/6: 2cm) {};
			\node[draw,circle,fill] (5) at (5*360/6: 2cm) {};
			\node[draw,circle,accepting] (6) at (6*360/6: 2cm) {};

			\draw[thick,->] (1) -- (2);
			\draw[thick,->] (3) -- (2);
			\draw[thick,->] (3) -- (4);
			\draw[thick,->] (4) -- (5);
			\draw[thick,->] (5) -- (6);
			\draw[thick,->] (6) -- (1);
		\end{tikzpicture}
		\caption{Cycle (no-instance)}
	\end{subfigure}
	\begin{subfigure}{0.3\textwidth}
		\centering
		\begin{tikzpicture}[scale=0.5]
			\node[draw,circle,fill] (1) at (1*360/6: 2cm) {};
			\node[draw,circle] (2) at (2*360/6: 2cm) {};
			\node[draw,circle] (3) at (3*360/6: 2cm) {};
			\node[draw,circle,accepting] (4) at (4*360/6: 2cm) {};
			\node[draw,circle,fill] (5) at (5*360/6: 2cm) {};
			\node[draw,circle,accepting] (6) at (6*360/6: 2cm) {};

			\draw[thick,->] (1) -- (2);
			\draw[thick,->] (2) -- (3);
			\draw[thick,->] (3) -- (4);
			\draw[thick,->] (4) -- (5);
			\draw[thick,->] (5) -- (6);
			\draw[thick,->] (1) -- (6);
		\end{tikzpicture}
		\caption{Cycle (yes-instance)}
	\end{subfigure}
	\caption{Cycles (filled vertices are in $S$ and circled vertices are in $T$)\label{fig:cycle}}
\end{figure}
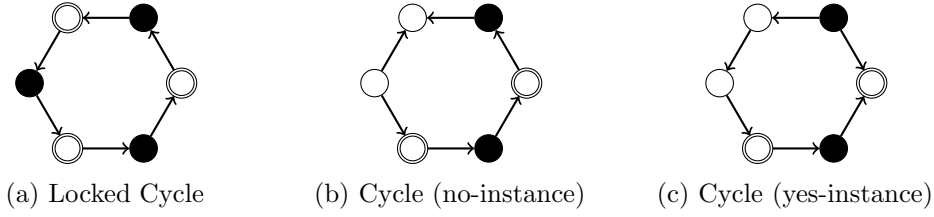

In this section, we assume that the vertices $v_1, \dots, v_n$ of an oriented cycle are always arranged in a clockwise order (e.g., see \cref{fig:cycle}).
We say that the cycle is \textit{completely clockwise} if its arcs are $(v_1, v_2), (v_2, v_3), \dots, (v_{n-1}, v_n)$, and $(v_n, v_1)$, and \textit{completely counter-clockwise} if its arcs are $ (v_n, v_{n-1}), \dots, (v_2, v_1)$, and $(v_1, v_n)$.

\begin{lemma}\label{lem:cycle:non-uniform}
	\TS on oriented cycles that are not completely clockwise or counter-clockwise is solvable in $2T(n)$ where $T(n)$ is the running time of
	solving \TS for oriented trees on $n$ vertices.
\end{lemma}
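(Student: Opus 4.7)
The plan is to locate a structurally special vertex $v$ of the cycle whose rigidity enables reducing the cyclic instance to at most two oriented-tree instances, invoking the tree algorithm of Ito et al.\ as a black box. First I would observe that because the cycle is not completely clockwise or counter-clockwise, traversing it shows that the arc directions must change at some vertex, producing a vertex $v$ whose two incident arcs either both point into $v$ (making $v$ a \emph{sink} of in-degree $2$) or both point out of $v$ (making $v$ a \emph{source} of out-degree $2$). Let $u$ and $w$ denote the two cycle-neighbors of $v$; the graphs $G - v$ and $G - \{u, v, w\}$ are then oriented paths, hence oriented trees, so \TS on either can be solved in $T(n)$ time.

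The pivot vertex $v$ is dynamically rigid: at a sink, any token that arrives at $v$ is trapped and its two neighbors $u, w$ are thereafter excluded, while at a source no token can ever reach $v$. I would case-split on whether $v$ is a sink or source and on the membership of $v$ in $S$ and $T$. The sub-cases $v \in S \setminus T$ at a sink and $v \in T \setminus S$ at a source are immediate no-instances. The sub-case $v \in S \cap T$ forces the token at $v$ to remain static throughout (a sink token cannot leave; a source token cannot be replaced if it left), so $u$ and $w$ are always empty and the problem reduces to the single tree instance \TS on $G - \{u, v, w\}$ with endpoints $S \setminus \{v\}$ and $T \setminus \{v\}$. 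The sub-case $v \notin S \cup T$ forces $v$ to remain empty throughout and reduces to the single tree instance \TS on $G - v$ with endpoints $S$ and $T$. The only nontrivial sub-cases are $v \in T \setminus S$ at a sink (where exactly one slide must move a token into $v$, coming from either $u$ or $w$) and $v \in S \setminus T$ at a source (where exactly one slide must move the token out of $v$, into either $u$ or $w$); for each, I would try both possibilities, producing two tree sub-instances on $G - v$. Concretely, for the sink case I would test \TS on $G - v$ from $S$ to $(T \setminus \{v\}) \cup \{u\}$ and from $S$ to $(T \setminus \{v\}) \cup \{w\}$ (each provided the proposed target is independent in $G$), and append the corresponding final slide $u \to v$ or $w \to v$ whenever one of the tree sub-instances is positive. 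Each branch invokes the tree algorithm at most twice, giving the claimed $2T(n)$ bound.

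The main obstacle will be establishing the forward direction of correctness in the branching sub-cases: while a positive tree sub-instance trivially lifts by appending the single slide at $v$, the converse requires showing that every yes-instance of the cycle admits a reconfiguration sequence whose unique $v$-slide can be placed as the very last move (or, symmetrically for the source, the very first move). I would prove this by a commutation argument that pushes slides performed after the $v$-slide — all of which are confined to $G - \{u, v, w\}$ because $v$ already holds a token — earlier in the sequence. The delicate point is that promoting such a post-slide ahead of the $u \to v$ slide reintroduces the token on $u$, which can clash with $u$'s other cycle-neighbor $u'$ whenever the post-slide lands on $u'$; this obstruction must be handled either by further local rearrangements inside the oriented tree $G - v$ or by choosing the alternative branch that uses $w$ instead of $u$, which is precisely why two tree sub-instances suffice.
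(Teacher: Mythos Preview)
Your overall structure (locate a source/sink vertex and case-split on its membership in $S$ and $T$) matches the paper, and your easy cases are correct. The gap is in the branching sub-case. You delete the vertex $v$ and replace it in the relevant endpoint by one of its neighbours, which forces the unique $v$-slide to be the very first (source) or very last (sink) move; you then try to justify this by commuting the remaining moves past it. That commutation genuinely fails, and ``choosing the alternative branch'' does not always rescue it. Concretely, take the $7$-cycle with arcs $(v_2,v_1),(v_3,v_2),(v_4,v_3),(v_4,v_5),(v_6,v_5),(v_6,v_7),(v_7,v_1)$, $S=\{v_2,v_4,v_6\}$, $T=\{v_1,v_3,v_5\}$. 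This is a yes-instance via $v_2\to v_1$, $v_4\to v_3$, $v_6\to v_5$. If you pick the source $v=v_4$, both modified start sets $\{v_2,v_3,v_6\}$ and $\{v_2,v_5,v_6\}$ fail to be independent, so you output \textsc{no}; the same happens for the source $v_6$ and the sink $v_1$. Your scheme is thus not sound for an arbitrary choice of $v$, and since the claimed bound is $2T(n)$ you cannot afford to try all of them.

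The paper sidesteps the commutation issue entirely: it always picks a \emph{source} $v$, and in the branching case $v\in S\setminus T$ it removes one of the two \emph{arcs} incident to $v$ rather than the vertex $v$ itself. The point is that the token on $v$ leaves exactly once, via exactly one of the two outgoing arcs; before it leaves, both neighbours of $v$ are blocked, and after it leaves, $v$ is permanently empty. Hence the entire reconfiguration sequence in $G$ never uses the other arc and is already a valid sequence in the oriented path obtained by deleting that arc. No reordering is needed, the $v$-slide may occur at any position, and two path instances (one per deletable arc) suffice.
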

\begin{proof}
	Given $(G,S,T)$, let $G=(V,E)$ with $V=\{v_1,\dots,v_n\}$ and $(v_1,v_2),(v_1,v_n)\in E$. This has to exist as the cycle
	is not completely clockwise or counter-clockwise. Notice that no token can enter $v_1$ from $G- v_1$.

	For the following we notice that solving a reconfiguration problem on an oriented path or a collection of oriented paths is
	easy via a simple backtracking algorithm or by the result from~\cite{ItoIKNOTW22}.
	Now we perform a case distinction.

	\begin{description}
		\item[$v_1\in S\cap T$:] We remove $v_1$ from the graph and have a \TS problem on
			an oriented forest which is easy to solve.
		\item[$v_1\not\in S\cup T$:] As no token can cross from $v_2$ to $v_n$ we can again, remove $v_1$ from the \TS problem and
			arrive at an oriented forest.
		\item[$v_1\in T\setminus S$:] As no token can enter $v_1$ this problem is always unsatisfiable.
		\item[$v_1\in S\setminus T$:] Here we perform two operations and take the logical OR of their decision.
			\begin{itemize}
				\item Remove the edge $(v_1,v_n)$ from the graph. Notice that $v_n\not\in S$ as otherwise the initial configuration was
				invalid. Now we arrive again at a reconfiguration problem on an oriented path which is easy to solve.
				\item Remove the edge $(v_1,v_2)$ from the graph. Notice again that $v_2\not\in S$ as it is a valid reconfiguration initial
				state. We again arrive at a problem on an oriented path which is easy to solve.
			\end{itemize}
			Now we take the OR of these two runs, meaning accept if at least one accepts and reject otherwise. The correctness is given as there
			is only one token in $v_1$ which has to move to $G - v_1$ and has to either take $(v_1,v_2)$ or $(v_1,v_n)$.
	\end{description}

	The correctness is clear from the previous discussion.
\end{proof}

\begin{theorem}\label{lem:cycle}
	\TS is solvable in polynomial time on oriented cycles.
\end{theorem}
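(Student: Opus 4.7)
The plan is to reduce almost everything to the preceding \cref{lem:cycle:non-uniform} and then handle the residual case of a completely directed cycle. Given an input $(G,S,T)$ on an oriented cycle with $n$ vertices, I first test in linear time whether $G$ is completely clockwise or completely counter-clockwise. If it is not, \cref{lem:cycle:non-uniform} already yields a polynomial-time decision (in time $2T(n)$, where $T(n)$ is the running time for \TS on oriented trees, which is polynomial by~\cite{ItoIKNOTW22}). Thus it suffices to solve the completely directed case, which by symmetry I may take to be completely clockwise, and where every token can only slide in the clockwise direction.

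For this case I would use the dichotomy captured by \cref{def:cycle:locked}. Whether $(G,S,T)$ is locked can be tested in polynomial time, since the first condition of the definition is automatic on a completely directed cycle, the second can be checked by inspecting the out-out-neighbor of each $s\in S$, and the third is just $S\neq T$. If $S=T$ the answer is trivially YES. If $(G,S,T)$ is locked, then for every $s \in S$ the out-out-neighbor of $s$ in the cycle is occupied by some $s'\in S$, so sliding $s$ to its unique out-neighbor $v$ would place a token adjacent to $s'$ and break independence; no valid slide exists, and the instance is a no-instance. Observe also that on a completely clockwise cycle any independent set of size $n/2$ must consist of alternating vertices (all clockwise gaps of length $2$), which is precisely a locked configuration; hence a non-locked instance necessarily satisfies $|S|=|T|<n/2$, so both $S$ and $T$ contain at least one clockwise gap of length $\geq 3$.

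The remaining and core step, which I expect to be the main obstacle, is to show that every non-locked instance on a completely clockwise cycle is a yes-instance and to extract a reconfiguration sequence in polynomial time. The intuition is that a clockwise gap of length $\geq 3$ is exactly what makes the slide preceding it valid (its out-out-neighbor being empty), and by repeatedly pushing a token through such a gap one can cyclically rotate any chosen token arbitrarily far clockwise. Since tokens on a one-way cycle cannot overtake each other, their cyclic order is preserved, so the decision reduces to finding a cyclic offset between $S$ and $T$ for which an explicit greedy procedure (advance whichever token is currently farthest behind its matched target through the maintained large gap) succeeds. The delicate part is verifying that the required large gap can always be preserved or relocated to wherever the next slide is needed, and bounding the total number of slides by a polynomial in $n$; a straightforward potential argument should give $O(n^2)$ slides overall, completing the polynomial-time algorithm.
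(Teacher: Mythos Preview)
Your approach is correct and essentially the same as the paper's: reduce non-uniform cycles to trees via \cref{lem:cycle:non-uniform}, then on a completely directed cycle declare locked instances NO and non-locked instances YES. The paper makes your anticipated potential explicit as $\Delta(\pi,S)=\sum_{s\in S}\ddir(s,\pi(s))$ for a cyclic-order-preserving matching $\pi$ and observes that in a non-locked configuration some token can always slide clockwise, decreasing $\Delta$ by one, which yields the polynomial bound you sketch.
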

\begin{proof}
	It is clear that testing if a cycle is locked is easy in polynomial time. Let us assume that the cycle is not locked and
	that it has the majority of edges
	clockwise as the other case is symmetric. There are now two cases.
	\begin{description}
		\item[There exists an edge $(u,v)$ which is counter-clockwise:]
			In this case we invoke \cref{lem:cycle:non-uniform}.
		\item[All edges are clockwise:] Notice that in this case there will always be a token that can move as we are not locked.
			For every $\pi:S\rightarrow T$ we can define the \emph{total distance}:
			\[
				\Delta(\pi,S) = \sum_{s\in S} \ddir(s,\pi(s)).
			\]

			Let $s_1,\dots,s_m$ be the tokens in order such that
			for each $s_i$, $s_{i+1}$ minimizes the directed distance from $s_i$ to $s_{i+1}$. Similarly, let us order the targets
			to get $t_1,\dots,t_m$. Let us pick the mapping such that $\ddir(s_1,t_i)$ is the largest.

			Let $S'_\pi$ be the set of all tokens
			that are not on their targets and let $S''_\pi\subseteq S'_\pi$ of tokens that can move
			and still be an independent set where both sets are with respect to a base set $S_\pi$.

			Notice now that our mapping
			$\Delta(\tau,S_{\tau,i}) +1 = \Delta(\tau,S_{\tau,i+1})$.
			This is now easy to see we can move $s_1$ to $\tau(s_1)$ (by perhaps moving other tokens). Each step $s_1$ takes will reduce
			the total distance by one. Each step any other token takes will also reduce the total distance by one as every token gets
			closer to its target (as the targets are in the same order).

			As $\Delta(\tau,S)<\infty$ for all $\tau$ and all $S$, the tokens have always a valid way to move.
			This means the instance is always solvable.
	\end{description}
\end{proof}

\subsection{Oriented Cographs}\label{sec:cograph}
A \textit{cograph} is a graph that does not contain a $P_4$ (a path on four vertices) as an induced subgraph.
\Textcite{KaminskiMM12} showed that \TS on undirected cographs can be solved in linear time.
By slightly modifying their algorithm, one can achieve a linear-time algorithm for the problem on oriented cographs.
Though it is trivial, for the sake of completeness, we present the algorithm here.

Recall that a \textit{co-component} of a graph $G$ is the subgraph of $G$ induced by vertices of a connected component of the complement
graph $\overline{G} = (V, \{uv \mid u, v \in V \text{ and } uv \notin E(G)\})$.
A graph $G$ is a cograph if and only if the complement of any nontrivial connected induced subgraph of $G$ is disconnected.
\begin{theorem}
	\TS on oriented cographs can be solved in linear time.
\end{theorem}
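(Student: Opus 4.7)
The plan is to adapt the recursive cotree-based algorithm of \textcite{KaminskiMM12} to the oriented setting. Recall that every cograph with at least two vertices is either a disjoint union or a join of two smaller cographs, and that this modular decomposition (the cotree) can be computed in linear time. I would walk the cotree top-down and handle each internal node according to its type, recursing on the children.

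At a disjoint-union node $G = G_1 \cup G_2$, no edge in either direction connects $V(G_1)$ and $V(G_2)$, so no token can ever migrate across. Hence $(G, S, T)$ is a yes-instance if and only if for each $i \in \{1, 2\}$ we have $|S \cap V(G_i)| = |T \cap V(G_i)|$ and $(G_i, S \cap V(G_i), T \cap V(G_i))$ is a yes-instance, decided by recursion. At a join node $G = G_1 + G_2$, every pair of vertices from opposite sides is joined by some directed edge (in one of the two possible orientations), so every independent set of $G$ lies entirely within a single $V(G_i)$. When $|S| = |T| \geq 2$, this forces every reachable independent set to remain in the same fixed co-component throughout any reconfiguration: a token crossing the join would leave at least one other token on the opposite side, violating independence. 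Thus the answer is ``no'' unless $S$ and $T$ both sit in the same $V(G_i)$, in which case we recurse on $(G_i, S, T)$.

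The only point at which the oriented version genuinely differs from its undirected counterpart is the single-token case. When $|S| = |T| = 1$, the question reduces to testing whether there is a directed path from the unique vertex of $S$ to the unique vertex of $T$, as opposed to mere undirected connectivity within the relevant sub-cograph. This is the main (and essentially only) obstacle, and it is a mild one: we simply run a linear-time BFS or DFS on the appropriate subgraph. Since the singleton-token queries are localized to disjoint subtrees of the cotree, their combined edge work is $O(|V(G)| + |E(G)|)$, so the overall algorithm still runs in linear time.
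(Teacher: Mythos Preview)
Your proposal is correct and matches the paper's own proof essentially line for line: recurse along the cograph's modular decomposition, splitting at disjoint unions, descending into the unique co-component at joins when $|S|\ge 2$, and replacing the undirected connectivity test by a directed reachability check in the single-token base case. The paper defers the linear-time justification to \textcite{KaminskiMM12}, whereas you spell out why the singleton reachability checks land in vertex-disjoint pieces; this extra detail is sound.
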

\begin{proof}
	Let $(G, S, T)$ be an instance of \TS on an oriented cograph $G$.
	We describe a linear-time algorithm to decide whether it is a yes-instance.

	If $|S| \neq |T|$ then clearly it is a no-instance.
	Otherwise, we consider the following cases:
	\begin{itemize}
		\item If $|S| = |T| = 1$, then we check if there is a directed path from $s$ to $t$ in $G$ where $S = \{s\}$ and $T = \{t\}$. If so, it is a yes-instance (we can slide the token on $s$ to $t$ along the directed path). Otherwise, it is a no-instance.
		\item If $|S| = |T| \geq 2$, we consider whether the underlying undirected graph $\Gund$ of $G$ is connected.
		      If not, we solve for each oriented component of $\Gund$ separately and combine the results.
		      Otherwise, note that an independent set of $\Gund$ must belong to a unique co-component.
		      Thus, if $S$ and $T$ belong to two different co-components, it is a no-instance.
		      Otherwise, we recursively solve the problem on the co-component containing $S$ and $T$.
	\end{itemize}
	The running time and correctness of this algorithm are clear from~\cite{KaminskiMM12}.
	Our proof is complete.
\end{proof}

\clearpage
\printbibliography
\end{document}